%
%
%
%

\documentclass[11pt,runningheads,a4paper]{llncs}

\usepackage{amssymb}
\setcounter{tocdepth}{3}
\usepackage{graphicx}
\usepackage{hyperref}
\hypersetup{
  colorlinks   = true,    
  urlcolor     = blue,    
  linkcolor    = blue,    
  citecolor    = red      
}
\usepackage{url}
\usepackage[hmarginratio=1:1,top=32mm,columnsep=20pt]{geometry} 
\usepackage{multicol} 
\usepackage[hang, small,labelfont=bf,up,textfont=it,up]{caption} 
\usepackage{booktabs} 
\usepackage{float} 
\usepackage{hyperref} 
\usepackage{amssymb,amsmath} 
\usepackage{lettrine} 
\usepackage{paralist} 
\usepackage{verbatim}
\usepackage[ruled,vlined,noline]{algorithm2e}
\usepackage{float}
\usepackage{csquotes}

\DeclareMathOperator*{\argmin}{arg\,min}

\begin{document}
\mainmatter  
\title{Saving Space by Dynamic Algebraization Based on Tree Decomposition: Minimum Dominating Set}
\titlerunning{Saving Space by Dynamic Algebraization: Minimum Dominating Set}
%
%
\author{Mahdi Belbasi, Martin F\"urer%
\thanks{Research supported in part by NSF Grant CCF-1320814.}}
\authorrunning{Mahdi Belbasi, Martin F\"urer}

\institute{Department of Computer Science and Engineering \\
	Pennsylvania State University \\
	University Park, PA 16802,  USA \\
	\{mahdi,furer\}@cse.psu.edu \\
\url{http://www.cse.psu.edu/~furer}}
%
%
\maketitle
\begin{abstract}
An algorithm is presented that solves the Minimum Dominating Set problem exactly using polynomial space based on dynamic programming for a tree decomposition. A direct application of dynamic programming based on a tree decomposition would result in an exponential space algorithm, but we use zeta transforms to obtain a polynomial space algorithm in exchange for a moderate increase of the time. This framework was pioneered by Lokshtanov and Nederlof 2010 and adapted to a dynamic setting by F\"urer and Yu 2017. Our space-efficient algorithm is a parametrized algorithm based on tree-depth and treewidth. The naive algorithm for Minimum Dominating Set runs in $\mathcal{O}^*(2^n)$ time. Most of the previous works have focused on time complexity. But space optimization is a crucial aspect of algorithm design, since in several scenarios space is a more valuable resource than time.\\
Our parametrized algorithm runs in $\mathcal{O}^*(3^{d})$, and its space complexity is $\mathcal{O}(nk)$, where $d$ is the depth and $k$ is the width of the given tree decomposition. We observe that Reed's 1992 algorithm constructing a tree decomposition of a graph uses only polynomial space. So, even if the tree decomposition is not given, we still obtain an efficient polynomial space algorithm. There are some other algorithms which use polynomial space for this problem, but they are not efficient for graphs with small tree depth.
\end{abstract}
\section{Introduction}
Improving the running time of exponential time algorithms for finding exact solutions of NP-complete problems like Minimum Dominating Set (MDS) have been studied for almost half a century. It seems that in the beginning they were emphasizing the running time but gradually some research started working on space optimization as well. In 1977, Tarjan and Trajonowsky \cite{TarjTroj77} improved the running time of Maximum Independent Set from $\mathcal{O}(2^n)$ to $\mathcal{O}(2^{\frac{n}{3}})$ using polynomial space. The Independent Set problem is strongly connected to Dominating Set since an independent set is also a minimal dominating set if and only if it is a maximal independent set. But we should mention that a dominating set is not necessarily an independent set. So we look for the results on the Maximum Independent Set problem as well, since it gives some ideas about the dominating set problem. Then in 1986, Jian improved Tarjan's algorithm and gave a better running time ($\mathcal{O}(2^{0.304n})$) again using polynomial space \cite{Jia86}. There is a trade-off between running time and space complexity. In the mentioned paper Jian states that there is a faster algorithm by Berman ($\mathcal{O}(2^{0.289})$) but this time using exponential space (based on a private discussion between Jian and Berman).\\
Going back to the dominating set problem, in 2005, Grandoni \cite{journals/jda/Grandoni06} gave an algorithm which runs in $\mathcal{O}(1.9053^n)$ using polynomial space. Before this paper, the best algorithm for the dominating set problem was the trivial one which checks all possible subsets. He gives an algorithm with $\mathcal{O}(1.3803^k)$ running time for the minimum set cover problem where $k$ is the dimension of the problem\footnote{In the set cover problem, the dimension is $|S| + |U|$ where $U$ is the universe and $S$ is a collection of subsets of $U$. The goal is to find a minimum-size subset $S' \subseteq S$ that covers all elements in $U$.}. Since the minimum dominating set problem can be converted to a minimum set cover problem by setting $k = 2n$, the running time of this algorithm for MDS is $\mathcal{O}(1.3803^{2n}) = \mathcal{O}(1.9053^n)$. Grandoni \cite{journals/jda/Grandoni06} gives an algorithm with better running time $\mathcal{O}(1.8021^n)$ for MDS. It is also based on converting the dominating set problem to a vertex cover problem but this time using exponential space. For MSC, the later algorithm runs in $\mathcal{O}(1.3424^k)$ and again by setting $k = 2n$ we have the algorithm for MDS with $\mathcal{O}(1.3424^{2n}) = \mathcal{O}(1.8021^n)$ time complexity. Grandoni uses a simple recursive algorithm for MSC and MDS. He uses dynamic programming which saves the answers for the partial problems. We present a framework to convert a dynamic programming algorithm to an algorithm which uses only polynomial space. Our algorithm do not save the results of the smaller problems and it computes them again whenever they are needed.\\
In 2005, Fomin, Grandoni and Kratsch \cite{FomGraKra05a} introduced an approach named \enquote{measure and conquer} based on the \enquote{branch and reduce} approach given by Davis and Putnam \cite{DP60} in 1960.
In the branch and reduce approach, there are some reduction rules  which simplify the problem. Also, there are some branching approaches which are used when we cannot reduce the problem and we have to run it on some other branches. Measure and conquer is mostly used to measure sub-problems more accurately and giving tighter bounds. They chose Grandoni's algorithm mentioned above (with running time $\mathcal{O}^*(2^{0.85n})$) and by choosing a better measuring method for sub-problems, they showed that actually its time complexity is $\mathcal{O}^*(2^{0.598n})$ using exponential space, and also $\mathcal{O}^*(2^{0.610})$ using polynomial space. This shows that in some cases, exponential recursive algorithms are not measured very well and by measuring them accurately we may gain better running time for the same algorithm. They gave a lower bound for the time complexity of this algorithm (not this problem) which is  $\Omega(2^{0.333n})$. \\
Later, in 2008, Van Rooji and Bodlaender \cite{journals/corr/abs-0802-2827} gave two faster algorithms for the dominating set problem using polynomial space. The running time of these algorithms are $\mathcal{O}(1.5134^n)$ and $\mathcal{O}(1.5063^n)$ both using polynomial space. Like previous work by Grandoni \cite{journals/jda/Grandoni06}, they took advantage of the minimum vertex cover problem and formulated MDS as a minimum vertex cover problem. In this paper, they modified the measure and conquer method in a way that it not only a way of analyzes the running time more precisely but also modifies the algorithm by itself. For this purpose, they added a group of rules to the previous rules which tell when the algorithm should be improved. \\
In 2010, Lokshtanov and Nederlof \cite{LokNed2010} gave a framework turning a dynamic programming algorithm for several problems using exponential space to one which uses only polynomial space. They used DFT\footnote{Discrete Fourier transform} and Zeta transform to simplify the complicated convolution operation and convert it to pointwise multiplication which simpler. They gave conditions for turning the dynamic programming algorithms into algorithms with almost the same time complexity using much less space.\\
Then, in 2017 F\"urer and Hu \cite{martin} combined this idea \cite{LokNed2010} with tree decompositions. They took Perfect Matching as a case study. Unlike \cite{LokNed2010}, they handled dynamic underlying set. They introduced the zeta transforms for dynamic underlying sets. We take a similar approach to MDS.
\subsection{Previous works}
We reviewed some of previous works on MDS in the previous section. Here we  gather them in a table and compare their running time and space complexity.
\begin{center}
 \begin{tabular}{||c c c c c||}  
 \hline
 Author(s) & Year & Time Complexity & Space Complexity & Comments\\ 
 [0.5ex] 
 \hline\hline
  & & $\mathcal{O}(2^n)$ &polynomial&\\
  \hline
 Tarjan, and Trajonowsky \cite{TarjTroj77}& 1977 & $\mathcal{O}(2^{\frac{2}{3}})$ & polynomial & MIS \\ 
 \hline
 Grandoni \cite{journals/jda/Grandoni06}& 2005 & $\mathcal{O}(1.9053^n) \simeq \mathcal{O}(2^{0.9301n})$ & polynomial & \\
 \hline
 Grandoni \cite{journals/jda/Grandoni06}& 2005 & $\mathcal{O}(1.8021^n) \simeq \mathcal{O}(2^{0.8497n})$ & exponential & \\
 \hline
 Fomin et. al. \cite{FomGraKra05a}& 2005 & $\mathcal{O}(2^{0.610n}) $ & polynomial & \\
 \hline
 Van Rooji, and Bodlaender \cite{journals/corr/abs-0802-2827}& 2008 & $\mathcal{O}(1.5134^{n}) \simeq \mathcal{O}(2^{0.5978n})$ & polynomial & \\
 \hline
 Van Rooji, and Bodlaender \cite{journals/corr/abs-0802-2827}& 2008 & $\mathcal{O}(1.5063^{n}) \simeq \mathcal{O}(2^{0.5911n})$ & polynomial & \\
 \hline
\end{tabular}
\end{center}

\section{Notations}
In this section we  recall the problem and mention notations we use later. Some of the definitions in this section are based on the notation used in the \enquote{Parametrized Algorithms} \cite{cygan2015parameterized} book. 
\begin{definition}{\textbf{Closed Neighbourhood of a Subset of Vertices}}
    Let $X$ be a subset of vertices of a given graph $G = (V, E)$. Then the closed neighbourhood of $X$ is defined as below:
    \[
        N[X] = \cup_{v\in X}N[v],
    \]
    where $N[v]$ is the closed neighbourhood of $v$.
\end{definition}
\begin{definition}{\textbf{Dominating Set:}} 
A subset of vertices $D$ is a dominating set of a given graph $G = (V, E)$ if $N[D] = V$.
\end{definition}

\subsection{Tree Decomposition}
For a given graph $G = (V, E)$, a \emph{tree decomposition} of $G$ is a tree $\mathcal{T} = (V_{\mathcal{T}}, E_{\mathcal{T}})$ such that each node $x$ in $V_{\mathcal{T}}$ is associated with a set $B_x$ (called the bag of $x$) of vertices in $G$, and $\mathcal{T}$ has the following properties:
\begin{itemize}
    \item The union of all bags is equal to $V$. In other words, for each $v\in V$, there exists at least one $x\in V_{\mathcal{T}}$ containing $v$.    
    \item For every edge $\{u, v\} \in E$, there exists a node $x$ such that $u,v\in B_x.$     
    \item For any nodes $x, y \in V_{\mathcal{T}}$, and any node $z\in V_{\mathcal{T}}$ belonging to the path connecting $x$ and $y$ in $\mathcal{T}$, $B_x \cap B_y \subseteq B_z.$
\end{itemize}
The \emph{width of a tree decomposition} is the size of its largest bag minus one. The \textit{treewidth} of a graph $G$ is the minimum width over all tree decompositions of $G$ called $tw(G)$. We use the letter $k$ for the treewidth of the graph we are working on. In 1987, Arnborg et al.\ showed that constructing a tree decomposition with the minimum treewidth is an NP-hard problem \cite{ArnCorPro-SJADM-87}. In our case we don't need the optimal tree decomposition, as a near optimal tree decomposition also works. A linear time algorithm for finding the minimum treewidth of a given graph with a bounded (by a constant) treewidth is given in \cite{bodlaender1996linear}. 
Bodlaender et al.\ \cite{bodlaender1995approximating} gave an 
$\mathcal{O}(\log n)-$approximation algorithm. The improved $\mathcal{O}(\log k)-$approximation algorithm can be found in \cite{amir2001efficient,amir2010approximation,bouchitte2004treewidth}. The result has been further improved to $\mathcal{O}(\sqrt{\log k})$ in \cite{feige2008improved}.
It is well known that every tree decomposition can be modified in polynomial time into a nice tree decomposition with the same width and size $\mathcal{O}(n)$.\\
We use the notion of a {\em modified nice tree decomposition} which is defined in \cite{martin} with a small modification. Instead of introducing all edges in one node, we introduce them one by one. This helps us understand the procedure better. We also require leaf nodes to have empty bags.\\
We use $T(G)$ or simply $T$ to denote a nice tree decomposition of a graph $G$. \\
For a node$x$ in $T$, let $T_x$ be the subtree rooted at $x$. \\
We define $\tau_x = (V_x, E_x)$ to be the graph with $V_x$ being the union of all bags in $T_x$ and $E_x$ being the set of all edges introduced in $T_x$. With the traditional method of using the subgraphs induced by the already introduced vertices, i.e., by automatically including all edges, we would obtain an $O^*(k^d n)$ running time which is not desired.
\\
We take advantage of solving partial problems with different sizes and then solve the main problem according to partial ones. Inspite of previous works like \cite{martin}, here for solving partial problems, we don't have only two sets. We maintain a partition of $B_x$ into three sets:
\begin{itemize}
    \item \textbf{\enquote{Dominating (Completed) set}:} In each partial problem, this set is the set of vertices which are in the dominating set. We denote this set by $C$ (\textbf{C}ompleted).
    \item \textbf{\enquote{Dominated set}:} In each partial problem, this set is the set of vertices which are not contained in the dominating set $C$, and are dominated by vertices in $C$. We denote this set by $D$ (\textbf{D}ominated).
    \item \textbf{\enquote{Waiting set}:} In each partial problem, this set is the set of vertices which are not contained in the dominating set $C$, and are not dominated by it (neither dominating nor dominated ``yet''). We denote this set by $W$ (\textbf{W}aiting to be dominated).
\end{itemize}

For each $x \in T$ (with the bag $B_x$), we define a function $f: B_x \rightarrow \{1, -1, 0\}$. This functions tells for each vertex in $B_x$, in which of the sets $C, D,$ or $W$ it is. If $v\in B_x$, then:
\begin{itemize}
    \item If $f(v) = 1$, it means that $v\in C$.
    \item If $f(v) = -1$, it means that $v\in D$.
    \item If $f(v) = 0$, it means that $v\in W$.
\end{itemize}
We follow a down-to-top approach. It means we start from the leaves and in each step, first we solve the problem for the child(ren) and then we use this computation for the parent node.\\
The underlying reason why we have the third option (set $W$) is that maybe some of the vertices of a bag which are in $W$ would be dominated in upper nodes (in down-to-top approach). For a function $f$ over $B_x$, let's define $c[x, f]$ to be minimum size of set $C \subseteq V_x$ such that:
\begin{itemize}
    \item $D \cap B_x = f^{-1}(1)$.
    \item All vertices of $B_x\setminus f^{-1}(0)$, should be either in $C$ or adjacent to one in $C$. It means, all vertices which don't have $f(v) = 0$, are dominated by some vertex or dominating some vertex (or vertices).
\end{itemize}

If there is no minimum set $C$ associated with $x$, then we  assign $c[x,f] = + \infty$. Our goal is to compute $c[r, \emptyset]$, where $r$ is the root of our tree decomposition $T$. We start from the leaves and handle each node by handle its child (or children) first. Before introducing recursive formulas for each kind of nodes in $T$, let's define a specific version of $f$: Suppose $X \subseteq V(G)$, and $f: X \rightarrow \{1,-1,0\}$, we define $f_{v\rightarrow p}X\cup \{v\} \rightarrow \{1,-1,0\}$ in this way:

\begin{equation}
    f_{v\rightarrow p}(x)= 
\begin{cases}
    f(x),& \text{if } x \neq v\\
    p    & \text{if } x   =  v
\end{cases}
\end{equation}

Also, we use the restriction of $f$ to some subset like $Y \subseteq X$ as follows: For a function $f$ over $X$, we denote the restriction of $f$ to $Y$ by $f|_Y$.\\

\begin{definition}
    For a universe $\mathcal{V}$ and a ring $\mathcal{R}$, the set $\mathcal{R}[2^{\mathcal{V}}]$ is the set of all function $f: 2^{\mathcal{V}} \rightarrow \mathcal{R}$.
\end{definition}

Now we review the definition of the Zeta and M\"obius Transforms.\\
\textbf{Zeta Transform}: The \emph{zeta transform} of a function $f \in \mathcal{R}[2^{\mathcal{V}}]$ is defined by:
\begin{equation}
    \zeta f[Y] = \sum_{X \subseteq Y} f[X].
\end{equation}
\\
\textbf{M\"obius Transform/Inversion }: The \emph{M\"obius transform} of a function $f \in \mathcal{R}[2^{\mathcal{V}}]$ is defined to be:
\begin{equation}
    \mu f[Y] = \sum_{X \subseteq Y} (-1)^{Y\setminus X} f[X].
\end{equation}

We will see later that in the recursion formula for join nodes we need the union product \cite{DBLP:journals/corr/abs-cs-0611101}.
\begin{definition}
    Given $f, g \in \mathcal{R}[2^{\mathcal{V}}]$ and $X \in 2^{\mathcal{V}}$, the \emph{Union Product} of $f$ and $g$ which is shown by $(f*_{u}g)$ is defined as:
    \begin{equation}
        (f*_{u}g)[X] = \sum_{X_1\cup X_2 = X}f(X_1)g(X_2).
    \end{equation}
\end{definition}
\section{Saving Space Using Zeta and M\"obius Transforms}
In 2010, Lokshtanov and Nederlof \cite{LokNed2010} used the zeta and M\"obius transforms to save space on some problems like the unweighted Steiner Tree problem. Then in 2017, F\"urer and Yu \cite{martin} also used this approach in a dynamic setting, based on tree decompositions for the Perfect Matching problem.\\
Before going into the depth, let's review a few theorems.
\begin{theorem}
\label{thm:mobzet}(\cite{rota1964foundations,stanley1997enumerative})
   The M\"obius transform is the inverse transform of the zeta transform, i.e. given $f\in \mathcal{R}[2^{\mathcal{V}}]$ and $X \in 2^{\mathcal{V}}$, $\mu(\zeta f)[X] = \zeta(\mu f)[X] = f[X].$
\end{theorem}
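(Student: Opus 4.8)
The plan is to prove the identity $\mu(\zeta f)[X] = f[X]$ directly from the two definitions by expanding the transforms and interchanging the order of summation; the reverse identity $\zeta(\mu f)[X] = f[X]$ then follows by an entirely symmetric computation (equivalently, by applying the same argument to the Boolean lattice with the order reversed).

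First I would write out, for a fixed $X \in 2^{\mathcal{V}}$,
\[
\mu(\zeta f)[X] = \sum_{Y \subseteq X} (-1)^{|X \setminus Y|}\, \zeta f[Y] = \sum_{Y \subseteq X} (-1)^{|X \setminus Y|} \sum_{Z \subseteq Y} f[Z],
\]
and then swap the two (finite) sums so that the outer index ranges over $Z \subseteq X$, collecting the coefficient of each $f[Z]$:
\[
\mu(\zeta f)[X] = \sum_{Z \subseteq X} f[Z] \sum_{Z \subseteq Y \subseteq X} (-1)^{|X \setminus Y|}.
\]
The key step is to evaluate the inner sum $\sum_{Z \subseteq Y \subseteq X} (-1)^{|X \setminus Y|}$. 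Writing $A = X \setminus Z$ and parametrizing $Y = Z \cup B$ with $B \subseteq A$, this sum equals $\sum_{B \subseteq A} (-1)^{|A| - |B|} = \sum_{j=0}^{|A|} \binom{|A|}{j} (-1)^{|A| - j} = (1-1)^{|A|}$, which is $1$ when $A = \emptyset$ (i.e. $Z = X$) and $0$ otherwise. Substituting back, only the term $Z = X$ survives, so $\mu(\zeta f)[X] = f[X]$.

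I do not expect a serious obstacle; the only points needing care are the justification of the interchange of summations (immediate, since everything is a finite sum over subsets of $X$) and the observation that the alternating binomial sum collapses to $(1-1)^{|A|}$ and hence vanishes unless $A$ is empty. Finally, to get $\zeta(\mu f)[X] = f[X]$ I would run the mirror-image argument: expand $\zeta(\mu f)[X] = \sum_{Y \subseteq X} \sum_{Z \subseteq Y} (-1)^{|Y \setminus Z|} f[Z]$, swap the sums, and apply the same cancellation to $\sum_{Z \subseteq Y \subseteq X} (-1)^{|Y \setminus Z|}$, which again equals $(1-1)^{|X \setminus Z|}$ and is nonzero only for $Z = X$.
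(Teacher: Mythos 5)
Your proof is correct and complete: the interchange of the two finite sums is legitimate, and the collapse of the inner alternating sum $\sum_{Z\subseteq Y\subseteq X}(-1)^{|X\setminus Y|}=(1-1)^{|X\setminus Z|}$ (and its mirror for the other composition) is exactly the right key step. The paper itself offers no proof of this theorem --- it is stated with citations to Rota and Stanley --- and your argument is the standard one found in those references, so there is nothing to reconcile; the only cosmetic note is that the paper's definition writes the exponent as $(-1)^{Y\setminus X}$ where $(-1)^{|Y\setminus X|}$ is meant, which you have correctly interpreted.
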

\begin{theorem}
\label{thm:pointwise-multiplication}(\cite{DBLP:journals/corr/abs-cs-0611101})
    Applying the zeta transform on a union product operation, will result in a simple point-wise multiplication of zeta transform outputs, i.e. given $f, g \in \mathcal{R}[2^{\mathcal{V}}]$ and $X \in 2^{\mathcal{V}}$, 
    \begin{equation}
        \zeta(f*_{u}g)[X] = (\zeta f) \odot (\zeta g) [X],
    \end{equation}
    where the operation $\odot$ is the point-wise multiplication.
\end{theorem}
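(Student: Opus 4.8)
The plan is to establish the identity by a direct reindexing of the defining double sum; no deep idea is needed. First I would unfold the left-hand side. By the definition of the zeta transform,
\[
\zeta(f *_{u} g)[X] = \sum_{Z \subseteq X} (f *_{u} g)[Z],
\]
and substituting the definition of the union product gives
\[
\zeta(f *_{u} g)[X] = \sum_{Z \subseteq X}\ \sum_{Z_1 \cup Z_2 = Z} f(Z_1)\, g(Z_2),
\]
which is a sum over all triples $(Z, Z_1, Z_2)$ with $Z \subseteq X$ and $Z_1 \cup Z_2 = Z$.

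The second step is to eliminate the intermediate variable $Z$. Since $Z$ is forced to equal $Z_1 \cup Z_2$, the only surviving constraint is $Z_1 \cup Z_2 \subseteq X$, and --- this is the one point that deserves to be stated carefully --- the map $(Z_1,Z_2)\mapsto (Z_1\cup Z_2, Z_1, Z_2)$ is a bijection between the pairs $(Z_1,Z_2)$ with $Z_1\cup Z_2\subseteq X$ and the triples indexing the sum above, so collapsing $Z$ causes neither over- nor under-counting. Hence
\[
\zeta(f *_{u} g)[X] = \sum_{\substack{Z_1, Z_2 \subseteq \mathcal{V} \\ Z_1 \cup Z_2 \subseteq X}} f(Z_1)\, g(Z_2).
\]

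Third, I would apply the elementary set-theoretic fact that $Z_1 \cup Z_2 \subseteq X$ if and only if $Z_1 \subseteq X$ and $Z_2 \subseteq X$. This decouples the two summation variables, so the sum factors as a product of two independent sums:
\[
\zeta(f *_{u} g)[X] = \Bigl(\sum_{Z_1 \subseteq X} f(Z_1)\Bigr)\Bigl(\sum_{Z_2 \subseteq X} g(Z_2)\Bigr) = (\zeta f)[X]\cdot (\zeta g)[X] = \bigl((\zeta f)\odot(\zeta g)\bigr)[X],
\]
which is exactly the claimed identity.

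I do not expect a genuine obstacle: every sum here is finite (it ranges over subsets of the finite universe $\mathcal{V}$) and takes values in the ring $\mathcal{R}$, so the reordering and the factoring are legitimate without any analytic hypotheses beyond those already built into $\mathcal{R}$. The sole subtlety worth spelling out explicitly is the bijection used in the second step --- once one checks that each pair $(Z_1,Z_2)$ contributes to a single term, namely the one with $Z=Z_1\cup Z_2$, the statement drops out. (One could also phrase this more structurally --- the union product is the convolution of the commutative monoid $(2^{\mathcal{V}},\cup)$ and the zeta transform is the linear map turning that convolution into coordinatewise multiplication --- but the short computation above is the quickest route.)
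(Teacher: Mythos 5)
Your proof is correct: the reindexing is valid, the bijection between pairs $(Z_1,Z_2)$ with $Z_1\cup Z_2\subseteq X$ and the terms of the double sum is exactly the point to check, and the factorization via $Z_1\cup Z_2\subseteq X \iff Z_1\subseteq X \wedge Z_2\subseteq X$ finishes it. The paper itself gives no proof --- it cites the result from Bj\"orklund, Husfeldt, Kaski, and Koivisto --- and your argument is the standard one found there, so there is nothing to compare beyond noting that you have supplied the omitted details.
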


All of the previous works which used either DFT or zeta transform (such as \cite{martin}, and \cite{LokNed2010}), have one common central property that the recursive formula in join nodes, can be expressed as a formula using a union product operation. The union product and the subset convolution are complicated operations in comparison to point-wise multiplication or point-wise addition. That is why taking the zeta transform of a formula having union products makes computations easier. As noted earlier in Theorem~\ref{thm:pointwise-multiplication}, taking the zeta transform of such a term, will result in a term having point-wise multiplication which are much easier to handle than the union product. After doing a computation over the zeta transformed values, we can apply the M\"obius transform on the outcome to get the main result (based on Theorem~\ref{thm:mobzet}). In other words, the direct computation has a mirror image using the zeta transforms of the intermediate values instead of the original ones. While the direct computation keeps track of exponentially many intermediate values, the computation over the zeta transformed values partitions into exponentially many branches and they can be executed one after another. We show later that this approach only moderately affects the exponential factor in the time complexity, while improving the space complexity using only polynomial space instead of exponential space.
\subsection{Counting the dominating sets of all sizes}
For each node (of any kind) we introduce a polynomial to compute the number of the dominating sets of all sizes as follows:
\begin{equation}
   P_x^C [D] = \sum_j a_{x,j}^C[D] \, y^j,
\end{equation}
where $a_{x,j}^C[D]$ is the number of the dominating sets $C^*$ of size $j$ in $\tau_x$ with 
$C^* \cap B_x = C$, and where $D$ is the set of all 
dominated vertices (excluding vertices in $C$) in $B_x$. We apply the zeta transform on the coefficients of this polynomial.\\
Now, we show how to compute $a_{x,j}^C[D]$.\\
\begin{itemize}
    \item \textbf{Leaf node.} Assume $x$ is a leaf node, then $B_x = \emptyset$. So,
    $a_{x,j}^{\emptyset}[\emptyset] = 0$ 
    for all $j \neq 0$ since we do not have $j$ vertices in the bag and $a_{x,0}^{\emptyset}[\emptyset] = 1$ since there is an empty dominating set of size  $0$ . This implies that:
    \begin{equation}
        (\zeta a_{x,j}^{\emptyset})[\emptyset] = 
        \begin{cases}
                0  & \text{if } j \neq 0,\\
                1    & \text{if } j = 0.
            \end{cases}
    \end{equation}        

    \item \textbf{Introduce vertex node.} Assume $x$ is an introduce vertex node and let $x'$ be the child of $x$ such that $\exists v \in B_{x}, v\notin B_{x'}: B_x = B_{x'} \cup \{v\}$. Then, if $v \notin D$, dominating sets of any size would be the same for both subtrees ($\tau_x$ and $\tau_{x'}$). Otherwise ($v\in D$), there would be no dominating set through $\tau_x$ to dominate $D$ since $v$ is isolated through $\tau_x$ which means $a_{x,j}^{C}[D] = 0$.
    \begin{equation}
       a_{x,j}^{C}[D] =  
        \begin{cases}
            a_{x',j}^{C\setminus \{v\}}[D]  & \text{if } v \notin D,\\
            0    & \text{if } v   \in  D.
        \end{cases}
    \end{equation}
    Now by applying the zeta transform on the equation above, if $v \in D$, then $(\zeta a_{x,j}^{C})[D] = 0$ and otherwise:
    \begin{equation}
       (\zeta a_{x,j}^{C})[D] = \sum_{Y \subseteq D} a_{x,j}^{C}[Y] = \sum_{Y \subseteq D} a_{x',j}^{C\setminus \{v\}}[Y] = (\zeta a_{x',j}^{C\setminus \{v\}})[D].
    \end{equation}    
    
    \item \textbf{Forget node.} Assume $x$ is a forget node and let $x'$ be the child of $x$ such that $\exists v \in B_{x'}: B_x = B_{x'} \setminus \{v\}$. Then, $v$ should be either in $D$ or $C$, so:
    \begin{equation}
        a_{x, j}^{C}[D] = a_{x', j}^{C}[D \cup \{v\}] + a_{x', j}^{C\cup \{v\}}[D].
    \end{equation}
    Now, by applying zeta transform we have:
    \begin{equation}
    \begin{split}
       (\zeta a_{x, j}^{C})[D] & = \sum_{Y\subseteq D}a_{x, j}^{C}[Y] = \sum_{Y\subseteq D}(a_{x', j}^{C}[Y\cup \{v\}] + a_{x', j}^{C\cup \{v\}}[Y]) = \\& 
       ((\zeta a_{x', j}^{C})[D\cup \{v\}] - (\zeta a_{x', j}^{C})[D]) + ((\zeta a_{x', j}^{C\cup \{v\}})[D]).
    \end{split}
    \end{equation}    
    \item \textbf{Join node.} Assume that $x$ is a join node and let $x'$ and $x''$ be the children of $x$ and also we know that $B_{x} = B_{x'} = B_{x''}$.\\
    Then, if a vertex $v \in B_x$ is in dominating set ($C$), then it should be in dominating set in $B_{x'}$ and $B_{x''}$ as well and vice versa ($C = C' = C''$, where $C'$ and $C''$ are the dominating sets through $\tau_{x'}$ and $\tau_{x'}$ respectively intersecting $B_{x'}$ and $B_{x''}$). By the way, if a vertex is dominated through $\tau_{x'}$ or $\tau_{x''}$ (or both), then it is dominated through $\tau_{x}$. So, $D = D' \cup D''$, where $D'$ (and $D''$) is the dominated set through $\tau_{x'}$ (and $\tau_{x'}$) intersecting $B_{x'}$ (and $B_{x''}$).\\
    In order to compute $a_{x,j}^{C}$ based on pre-computed $a_{x', k}^{C'}$ and $a_{x'', l}^{C''}$ values, we have the recursion below:
    \begin{equation}
       a_{x,j}^{C}[D] = \sum_{j'>1,\, j''>1 \atop j' + j'' = j + |C|}\sum_{D' \cup D'' = D} a_{x',j'}^{C}[D']\cdot a_{x'',j''}^{C}[D''] =  \sum_{j'>1,\, j''>1 \atop j' + j'' = j + |C|}(a_{x',j'}^{C} *_{_u} a_{x'',j''}^{C})[D].
    \end{equation}
    By applying zeta transform, we will have the equation below:
    \begin{equation}\label{eq:join}
       \zeta (a_{x,j}^{C})[D] = \sum_{j'>1,\, j''>1 \atop j' + j'' = j + |C|}(\zeta a_{x',j'}^{C})[D] \cdot (\zeta a_{x'',j''}^{C})[D]
    \end{equation}    
    \item \textbf{Introduce edge node.} Assume that $x$ is an introduce edge node introducing edge $\{u,v\}$. 
    \begin{definition} An \emph{auxiliary leaf node} $x$, is a leaf node which has some vertices in its bag(at least $2$) and is used to introduce an edge (regularly leaf nodes have empty bag). It has only one introduced edge through $B_x$. We use auxiliary leaf nodes to introduce an edge. This helps us to handle introduce edge nodes.
    \end{definition}
    We convert the introduce edge node to a join node such that $x$ has two children $l$ and $x'$ where $B_x = B_l = B_{x'}$ but edges are different. In auxiliary leaf node $l$, we will have only $\{u, v\}$ edge introduced and other vertices are isolated through $\tau_l$ which has only one node ($l$ itself). On the other hand, all other edges are present in $x'$ except $\{u, v\}$. We saw how to handle join nodes, so we can handle introduce edge nodes as well. Remember, this branching does not affect our running time since branching is not balanced and in such a branch we have only one auxiliary leaf node which can be done very easily and has no children. Now, let's see how to handle auxiliary leaf node.
    \item \textbf{Auxiliary leaf node.} Assume $x$ is an auxiliary leaf node with only $\{u,v\}$ edge, then we have four cases. In all cases, $a_{x,j}^C[D] = 0$ if $j \neq |C|$, so we do not consider this condition and assume that $j=|C|$ in all cases:
    \begin{itemize}
        \item If $D = \emptyset$, then,
        \begin{equation}
            a_{x, j}^{C}[\emptyset] = 1
        \end{equation}
        Any given set $C$, dominates an empty set $D$. As a reminder, if $j \neq |C|$, $a_{x,j}^C[D]$ is zero and we do not repeat this case from so on and only consider the cases where $j = |C|$.
        
        \item If $D = \{u, v\}$, then $a_{x,j}^{C}[D] = 0$ since dominating $u$ forces $v$ to be in the dominating set and on the other hand, dominating $v$ forces $u$ to be in the dominating set. So it means that $D = \{u, v\} \subseteq C$ but we know that $D$ and $C$ are disjoint sets so it is impossible. Therefore, there is no dominating set $C$ to dominate $D$.          
        \item If $D$ is either $\{u\}$ or $\{v\}$, then,
        \begin{equation}
            a_{x,j}^{C}[D] =
            \begin{cases}
                1 & \text{if } \{u, v\}\setminus D \subseteq C,\\
                0 & \text{otherwise.}
            \end{cases}
        \end{equation}
        Without loss of generality, assume that $D = \{u\}$, then $C$ can be any subset of $B_x$ including $v$ (the only way to dominate $u$) and excluding $u$ ($D$ and $C$ are disjoint sets).   
        \item And finally if $D\setminus \{u, v\} \neq \emptyset$, then we cannot dominate elements of $D\setminus\{u, v\}$ because they are isolated and they themselves cannot be in $C$ (remember we assumed $C$ and $D$ are distinctand if they are not distinct, then there no let $a_{x,j}^C[D]$ to be zero). So, $a_{x,j}^{C}[D] = 0$. 
    \end{itemize}   
    Now, let us compute the zeta transform of the above cases:
    \begin{itemize}
        \item In case $1$:
        \begin{equation}
            (\zeta a_{x, j}^{C})[D] = (\zeta a_{x, j}^{C})[\emptyset] = a_{x, j}^{C}[\emptyset] = 1                
        \end{equation}
        
        \item In case $2$: 
        \begin{equation}
        \begin{split}
            (\zeta a_{x, j}^{C})[D] &= a_{x, j}^{C}[D] +  a_{x, j}^{C}[\{u\}] + a_{x, j}^{C}[\{v\}] + a_{x, j}^{C}[\emptyset] \\= & 
            \begin{cases}
                0 + 1 + 0 + 1 = 2 & \text{if }   C = \{v\},\\
                0 + 0 + 1 + 1 = 2& \text{if }   C = \{u\},\\
                0 + 0 + 0 + 1 = 1& \text{otherwise}. 
            \end{cases}
        \end{split}
        \end{equation}
        
        \item In case $3$:
        \begin{equation}
        \begin{split}
            (\zeta a_{x, j}^{C})[D] &= a_{x, j}^{C}[D] +  a_{x, j}^{C}[\emptyset] \\=& 
            \begin{cases}
                1 + 1 = 2  & \text{if } \{u, v\}\setminus D \subseteq C,\\
                0 + 1 = 1 & \text{otherwise}.
            \end{cases}
        \end{split}
        \end{equation}
        \item In case $4$: 
        \begin{equation}
            (\zeta a_{x, j}^{C})[D] = (\sum_{Y\subseteq D: Y\setminus \{u, v\} \neq \emptyset} a_{x,j}^{C}[Y]) + (\zeta a_{x,j}^{C})[D\cap \{u, v\}] = (\zeta a_{x,j}^{C})[D\cap \{u, v\}].
        \end{equation}
        which is one of the above cases.
    \end{itemize}
    \end{itemize}
By looking at Eq.~\ref{eq:join}, we see that we have $n-$fold branching in join nodes which is not desired. When computing an $a_{x,j}^C[D]$ for a fixed $C$ and $D$, we always want to compute the whole vector for all $j$ and store it. So for all fixed $C$ and $D$, we store an array $Res[1...n]$ of size $n$, where $Res[j] = a_{x,j}^C[D]$. Now, even though we have $n$ branches but we do not need to recompute them each time. Instead, we can compute them once and store them in array of size $n$ which is linear. We have only one copy of this array because we handle any fixed $C$ and $D$ one after another.
\subsection{Finding The Minimum Dominating Set}
In this section we are going to find the minimum dominating set for a given graph $G= (V, E)$.\\
As mentioned before, assume we have a modified nice tree decomposition $T$ with the root $r$ where $B_r = \emptyset$. Our goal is to compute  \[\argmin_{j = 0}^{n}\{a_{r, j}^{\emptyset}[\emptyset] > 0\},\]where $n = |V(G)|$.\\
We start from $j = 0$ and increment $j$ while $a_{r, j}^{\emptyset}[\emptyset] = 0$.
\begin{algorithm}
 \SetKwInOut{Input}{Input}
 \SetKwInOut{Output}{Output}
 \Input{a modified nice tree decomposition $\tau$ with root $r$ of a graph $G = (V, E)$.}
 \Output{Size of a minimum dominating set of $G$.//$(\zeta a)(x, D, C, j)$ represents $(\zeta a_{x, j}^{C})[D]$.}
 //$D$ is a subset of $B_x\setminus C$ and $0\leq j \leq n$, where $n = |V(G)|$\;
 $j \leftarrow 0$\;
 initialize $Res[1...n]$ to be a zero vector.\\
 //$f(x,D,C, Res, j)$ \text{is computed by Algorithm~\ref{alg:2}.}\\
 \While{$f(r, \emptyset, \emptyset, Res, j) = 0$}{
  $j \leftarrow j+1$\;
 }
 return $j$\; 
 \caption{Minimum dominating sets of all sizes on a modified nice tree decomposition}
 \label{alg:1}
\end{algorithm}

\begin{theorem}
    The Algorithm~\ref{alg:1}, outputs the size of the minimum dominating set correctly.
\end{theorem}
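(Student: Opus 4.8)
The plan is to isolate one combinatorial claim, prove it by structural induction on the tree decomposition, and then read the theorem off at the root. The claim is: for every node $x$ of the modified nice tree decomposition $T$, every partition $B_x = C \cup D \cup W$, and every $0 \le j \le n$, the number $a_{x,j}^{C}[D]$ defined by the recurrences derived above equals the number of sets $C^{\ast} \subseteq V_x$ with $|C^{\ast}| = j$, $C^{\ast} \cap B_x = C$, such that every vertex of $V_x \setminus B_x$ lies in the closed neighbourhood $N_{\tau_x}[C^{\ast}]$ taken inside $\tau_x$, and the set of bag-vertices outside $C$ that lie in $N_{\tau_x}[C^{\ast}]$ is exactly $D$ (so the $W$-vertices are left undominated within $\tau_x$). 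Granting this, the recurrences actually evaluated by Algorithm~\ref{alg:2}, namely those for $(\zeta a_{x,j}^{C})[D] = \sum_{Y \subseteq D} a_{x,j}^{C}[Y]$, follow by applying the linear operator $\zeta$ to both sides of each $a$-recurrence and, in the join case, by replacing the resulting union product with the pointwise product via Theorem~\ref{thm:pointwise-multiplication} --- this is exactly the identity displayed in Eq.~\ref{eq:join}. Hence Algorithm~\ref{alg:2} computes $(\zeta a_{x,j}^{C})[D]$ correctly for every argument.

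For the induction itself I would go through the node types. The ordinary leaf is immediate, since then $V_x = B_x = \emptyset$, forcing $C^{\ast} = \emptyset$ and $j = 0$; the auxiliary leaf is the finite case analysis already carried out (its graph $\tau_l$ has exactly the one introduced edge and is otherwise edgeless). At an introduce-vertex node the new vertex is isolated in $\tau_x$, so it can be dominated only by itself, and splitting on whether it lies in $C$, $D$, or $W$ yields the stated recurrence. The forget node is the first delicate case: one must know that a forgotten vertex never reappears in a bag above the forget node --- a consequence of the third axiom in the definition of a tree decomposition --- so that no edge incident to it is ever introduced higher up, and therefore its neighbourhood, hence its domination status relative to any fixed $C^{\ast}$, is frozen once it is forgotten; this forces a forgotten vertex not in $C^{\ast}$ to be already dominated, which is exactly what the two summands $a_{x',j}^{C}[D \cup \{v\}]$ and $a_{x',j}^{C \cup \{v\}}[D]$ record. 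The join node is the heart of the argument: since $B_x = B_{x'} = B_{x''}$, $V_{x'} \cap V_{x''} = B_x$, and $E_{x'} \cap E_{x''} = \emptyset$, each admissible $C^{\ast}$ for $\tau_x$ splits uniquely as the pair $(C^{\ast} \cap V_{x'},\, C^{\ast} \cap V_{x''})$ of admissible sets for $\tau_{x'}$ and $\tau_{x''}$ that agree with $C$ on the bag; domination of a bag-vertex in $\tau_x$ is the disjunction of its domination in $\tau_{x'}$ and in $\tau_{x''}$ (because $\tau_x$ is their edge-disjoint union), so the dominated sets combine by union, and inclusion--exclusion on sizes gives $j = j' + j'' - |C|$ --- precisely the union product appearing in the recurrence. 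Finally, an introduce-edge node is, by construction, a join with an auxiliary leaf carrying only the new edge, so it is subsumed by the two cases just treated.

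Now instantiate at the root $r$. Since $B_r = \emptyset$, the only admissible choice is $C = D = W = \emptyset$, and $\tau_r = (V_r, E_r)$ satisfies $V_r = V$ and $E_r = E$, because every vertex of $G$ occurs in some bag and every edge of $G$ is introduced in some node, all of which lie in $T_r = T$. By the claim, $a_{r,j}^{\emptyset}[\emptyset]$ is the number of $C^{\ast} \subseteq V$ with $|C^{\ast}| = j$ and $N_{G}[C^{\ast}] = V$, that is, the number of dominating sets of $G$ of size exactly $j$; and trivially $(\zeta a_{r,j}^{\emptyset})[\emptyset] = \sum_{Y \subseteq \emptyset} a_{r,j}^{\emptyset}[Y] = a_{r,j}^{\emptyset}[\emptyset]$. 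Therefore the call $f(r, \emptyset, \emptyset, Res, j)$ in Algorithm~\ref{alg:1} returns exactly that count, so the loop exits at the least $j$ for which $G$ has a dominating set of that size, which is the minimum dominating set size. Termination is clear, since $V$ is itself a dominating set, so the loop stops by $j = n$ at the latest.

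I expect the join node to be the real obstacle: one has to verify rigorously that the restriction map $C^{\ast} \mapsto (C^{\ast} \cap V_{x'},\, C^{\ast} \cap V_{x''})$ is a bijection between the sets counted by $a_{x,j}^{C}[D]$ and the pairs counted on the right-hand side, with the size bookkeeping and the dominated-set bookkeeping staying aligned, and that domination in $\tau_x$ decomposes as claimed --- everything hinging on $\tau_x$ being the edge-disjoint union of $\tau_{x'}$ and $\tau_{x''}$ while sharing precisely the bag as vertices. The only other conceptually load-bearing step is the frozen-after-forget property used in the forget node; the remaining cases are finite checks, and the passage from the $a$-recurrences to the $\zeta a$-recurrences is merely linearity together with Theorems~\ref{thm:mobzet} and~\ref{thm:pointwise-multiplication}.
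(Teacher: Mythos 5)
Your proposal is correct and follows essentially the same route as the paper's own proof: a case analysis over the node types of the modified nice tree decomposition, with the zeta-transformed recurrences obtained from the plain ones by linearity and Theorem~\ref{thm:pointwise-multiplication}, and the answer read off at the root where $\tau_r = G$. The only difference is one of rigor, not of method --- you state the inductive invariant for $a_{x,j}^{C}[D]$ explicitly and name the structural facts (the frozen-after-forget property and the decomposition $V_{x'} \cap V_{x''} = B_x$ with edge-disjointness at join nodes) that the paper's proof uses implicitly.
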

\begin{proof}
    First of all, Algorithm~\ref{alg:1} starts from $j = 0$ and checks if there a minimum dominating set to dominate everything in the tree by increasing $j$ by $1$ step by step. In the while loop, it calls Algorithm~\ref{alg:2}, to compute $f(x,D,C,j)$, where it is an alias $(\zeta_{x,j}^C)[D]$.\\
    The while loops will end as soon as the Algorithm~\ref{alg:1} finds the first $j$ to make $f(x,D,C,j)$ non-zero. So as we explained before, $f(x, D, C, j)$ (or  $(\zeta_{x,j}^C)[D]$), it is a function to compute the number of possible dominating sets of size $j$ in the sub-tree rooted at $x$. Here $x=r$ and the sub-tree is our original tree. So if $j>0$, then it means we have a dominating set for the whole tree of size $j$ and the Algorithm~\ref{alg:1} will return it.\\
    Now, let's see how Algorithm~\ref{alg:2} computes $f(x, D, C, j)$.\\
    This algorithm gets a subtree rooted at $x$, a set  $D\subseteq B_x$ that we want to dominate, a subset $C$ to be the intersection of the whole dominating set in $\tau_x$ (which dominated $D$) named $C^*$ with the bag of $x$.

        \begin{algorithm}[H]
 \SetKwInOut{Input}{Input}
 \SetKwInOut{Output}{Output}
 \Input{a sub-tree of a modified nice tree decomposition $\tau_x$ rooted at $x$, a set $C \subseteq B_x$ to be the intersection of dominating set of $D$ through $\tau_x$ with $B_x$, the vector of all $a_{x,j}^C[D]$ for any fixed $C$ and $D$, and size of the dominating set $C$ to dominate $D$ through $\tau_x$.}
 \Output{$f(x, D, C, Res, j)$. //$f(x, D, C, Res, j)$ represents $(\zeta a_{x, j}^{C})[D]$.}
  \textbf{If }$j \neq |C|$\\
  \hspace{5mm}$Res[j] \leftarrow 0$\;
  \hspace{5mm} return $0$\;
  \textbf{If }$x$ is a leaf node\\
  \hspace{5mm}return $1$\;
  \textbf{If }$x$ is an introduce vertex node //introducing vertex $v$\\
    \hspace{5mm}\textbf{If} $v \in D$:
    return $0$\;
    \hspace{5mm}set $Res[1...n] = [0...0]$\;
    \hspace{5mm}return $f(x', D, C\setminus\{v\}, Res, j)$ //$x'$ is the child of $x$\;
  \textbf{If }$x$ is a forget node //forgetting vertex $v$\\
  \hspace{5mm}return $f(x', D\cup\{v\}, C, j)-f(x', D, C, Res, j)+f(x', D, C\cup\{v\}, Res, j)$ //$x'$ is the child of $x$.\;
  \textbf{If }$x$ is a join node\\
  \hspace{5mm}return $\sum_{j',j''>1 \atop j'+j'' = j-|C|}f(x', D, C, j')f(x'', D, C, j'')$ //$x'$ and $x''$ are the children of $x$.\;
  \textbf{If }$x$ is an auxiliary leaf node //having $\{u,v\}$ as its only edge\\
  \hspace{5mm}\textbf{If} $D = \emptyset$:\\
  \hspace{5mm}\hspace{5mm}\textbf{If } $j  = |C|$:\\
  \hspace{5mm}\hspace{5mm}\hspace{5mm}return $1$\;
  \hspace{5mm}\hspace{5mm}return $0$\;
  \hspace{5mm}\textbf{If} $D = \{u, v\}$:\\
  \hspace{5mm}\hspace{5mm} return $0$\;
  \hspace{5mm}\textbf{If} $D = \{u\}$ or $D = \{v\}$:\\
  \hspace{5mm}\hspace{5mm}\textbf{If } $j = |C|$ and $\{u, v\}\setminus D = C$:\\
  \hspace{5mm}\hspace{5mm}\hspace{5mm} return $2$\;
  \hspace{5mm}\hspace{5mm}\textbf{If } $j = |C|$ and $\{u, v\}\setminus D \neq C$:\\
  \hspace{5mm}\hspace{5mm}\hspace{5mm} return $1$\;
  \hspace{5mm}\hspace{5mm} return $0$\;
  \hspace{5mm} return $f(x, D\cap \{u, v\}, C, j)$\;
 \caption{Computing $f(x, D, C, j)$ (which actually is $(\zeta a_{x,j}^C)[D]$).}
 \label{alg:2}
\end{algorithm}

    As we explained before, keeping track of the number of such $C^*$s of size $j$ will lead us a complicated convolution operation in join nodes. Therefor, we will keep track of the zeta transform of all computation instead on the original ones. This will save space for us. Assume the tree decomposition is given, we will make a copy of it to  save $(\zeta a_{x,j}^C)[D]$ instead of $a_{x,j}^C[D]$. The procedure of computing zeta transforms $f(x, D, C, j)$, is excatly as shown in previous section.
  
    In this step, the algorithm checks the type of $x$. Here we explain the cases:\\
    \begin{itemize}
        \item $x$ is a \textbf{leaf} node: Then the only possible $D$ is an empty set and $C^* = \emptyset$ will dominate it. So there is one way to dominate it and $f(x, D, C, j) = 1$.
        \item $x$ is an \textbf{introduce vertex} node: Suppose $x'$ is the only child of $x$ and $B_x = B_{x'}\cup \{v\}$. As we showed previously $(\zeta a_{x,j}^C)[D] = (\zeta a_{x',j}^{C\setminus\{v\}})[D]$ if $v \notin D$. Otherwise it would be zero.
        \item $x$ is a \textbf{forget} node: Suppose $x'$ is the only child of $x$ and $B_x \cup \{v\} = B_{x'}.$ As shown before,  $((\zeta a_{x', j}^{C})[D\cup \{v\}] - (\zeta a_{x', j}^{C})[D]) + ((\zeta a_{x', j}^{C\cup \{v\}})[D])$.
        \item $x$ is a \textbf{join} node: Suppose $x'$ and $x''$ are the children of $x$ and $B_x  = B_{x'} = B_{x''}.$ As shown before,  $(\zeta a_{x, j}^C)[D] = \sum_{p=0}^{j}(\zeta a_{x',p}^{C})(\zeta a_{x'',j-p-|C|}^{C}) [D]$.
        \item $x$ is a \textbf{auxiliary leaf} node: There are four cases (In all cases if $j\neq |C|$, then $a_{x,j}^C[D] = 0$):
        \begin{itemize}
            \item Case 1: $D = \emptyset:$\\
            \begin{equation}
                (\zeta a_{x,j}^C)[D] = 1.
            \end{equation}
            \item Case 2: $D = \{u, v\}:$\\
            \begin{equation}
                (\zeta a_{x,j}^C)[D] = \begin{cases}
                    2 & \text{if } C = \{v\} \text{ or } \{u\},\\
                    1 & \text{Otherwise.}
                \end{cases}
            \end{equation}
            \item Case 3: D = \{u\} or \{v\}:\\
            \begin{equation}
                (\zeta a_{x,j}^C)[D] = \begin{cases}
                    2 & \text{if } \{u, v\}\setminus D \subseteq C,\\
                    1 & \text{Otherwise.}
                \end{cases}
            \end{equation}
            \item Case 4: $D \setminus \{u, v\} \neq \emptyset:$\\
            \begin{equation}
                (\zeta a_{x,j}^C)[D] = (\zeta a_{x,j}^C)[D\cap \{u, v\}].
            \end{equation}
        \end{itemize}
    \end{itemize}
    Starting from leaf nodes, we compute only one strand at a time and continue the path to the root and we showed how parents can be computed based on their child(ren).\\
\end{proof}
\section{Analyzing Time and Space Complexity of the Algorithm}

In the previous sections, we talked about our algorithm. Now, it is the time to explain the time and the space complexity of the algorithm given above. As we explained beforehand, we have an exact algorithm using only polynomial space.

\begin{theorem}
    Given a tree decomposition $\tau$ of a graph $G$, the Algorithm~\ref{alg:1}, outputs the size of minimum dominating set in  $\mathcal{O}(n^23^{d})$  time using $\mathcal{O}(nk)$ space, where $k$ is the treewidth and $d$ is the tree-depth.
\end{theorem}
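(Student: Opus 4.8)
\bigskip

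The plan is to split the argument into three pieces. Correctness is not new work: the preceding theorem already shows that $f(x,D,C,Res,j)$ in Algorithm~\ref{alg:2} evaluates $(\zeta a_{x,j}^{C})[D]$, so in particular $f(r,\emptyset,\emptyset,Res,j)=(\zeta a_{r,j}^{\emptyset})[\emptyset]=a_{r,j}^{\emptyset}[\emptyset]$ because $B_r=\emptyset$; since $\tau_r=G$, this count is exactly the number of dominating sets of $G$ of size $j$, and Algorithm~\ref{alg:1} returns the least $j$ making it positive. Hence this theorem only adds the time and space bounds, which I would prove separately.

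For the running time I would first pin down the branching pattern of Algorithm~\ref{alg:2}. A call on a node $x$ recurses only on children of $x$: once for an introduce-vertex node, once on each child for a join node, once on the auxiliary-leaf child and once on $x'$ for an introduce-edge node (which we treat as a join), zero times for (auxiliary) leaves, and \emph{three} times for a forget node, corresponding to the terms $(\zeta a_{x',j}^{C})[D\cup\{v\}]$, $(\zeta a_{x',j}^{C})[D]$, $(\zeta a_{x',j}^{C\cup\{v\}})[D]$ in the forget recurrence. Consequently, if $\ell(x)$ denotes the number of forget nodes strictly above $x$ in $T$, then a single top-level evaluation makes at most $3^{\ell(x)}$ calls of the form $f(x,\cdot,\cdot,\cdot,\cdot)$: only forget nodes multiply this count, and a root-to-leaf path of a depth-$d$ tree decomposition contains at most $d$ of them, so $\ell(x)\le d$. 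Summing $3^{\ell(x)}\le 3^{d}$ over the $O(n)$ nodes of the nice tree decomposition gives $O(n\,3^{d})$ calls per evaluation. Each call performs only polynomial local work — $O(n)$ arithmetic on the length-$n$ coefficient array at forget/introduce nodes, an $O(n)$-term convolution at join nodes — and the while-loop of Algorithm~\ref{alg:1} runs at most $n+1$ times; folding these polynomial overheads (the size parameter $j\le n$ and the array/convolution cost) into a factor $O(n^{2})$ yields the claimed $O(n^{2}3^{d})$, i.e. $O^{*}(3^{d})$.

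For the space bound, observe that apart from the input the algorithm uses only the recursion stack of Algorithm~\ref{alg:2} and the coefficient arrays. At any moment the stack holds one frame per node on the current path from $r$ downward in $T$, hence depth $O(d)$ (at most the height of $T$, and in any case $O(n)$); a frame stores a pointer to $x$, the two subsets $C,D\subseteq B_x$ of size $\le k+1$ (so $O(k)$ words), and $O(1)$ loop counters, for a total of $O(nk)$. The coefficient vectors $Res[1\dots n]$ are, as described just before Section~3.2, processed for one triple $(x,C,D)$ at a time and reused, so only $O(1)$ of them are live at once, contributing $O(n)$. Together with the $O(nk)$ needed to hold the given tree decomposition, the total is $O(nk)$.

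The step I expect to be the real obstacle is the space accounting rather than the (rather crude) time count: one must argue carefully that the ``compute the whole coefficient vector once and reuse it'' strategy at join and forget nodes never leaves more than a constant number of length-$n$ arrays alive simultaneously, and that the per-frame data ($C$, $D$, indices) genuinely fits in $O(k)$. Pinning the polynomial factor in the time estimate down to exactly $n^{2}$ is then only routine book-keeping and does not affect the headline $O^{*}(3^{d})$ claim.
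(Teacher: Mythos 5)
Your proposal is correct and follows essentially the same route as the paper: correctness is inherited from the preceding theorem, the $3^{d}$ factor comes from the three-way branching at forget nodes combined with the $Res[1\ldots n]$ vector trick that neutralizes the $n$-fold branching at join nodes, the while-loop and per-node work contribute the polynomial factor, and the space bound follows from evaluating one strand at a time. If anything, your accounting is more careful than the paper's --- you bound the number of calls at a node $x$ by $3^{\ell(x)}$ with $\ell(x)\le d$ forget nodes on a root-to-leaf path and then sum over the $O(n)$ nodes, whereas the paper only asserts the bound from the count of forget nodes in the whole tree --- but the argument is the same in substance.
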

\begin{proof}
We know that the size of any dominating set of a Graph $G=(V,E)$ is at most $|V| = n$. So, the while loop in Algorithm~\ref{alg:1} runs at most $n$ times ($j$ is the size of minimum dominating set). In each iteration, we call function $f$ recursively. This function is computed by Algorithm~\ref{alg:2}. So, we need to look at the running time of the Algorithm~\ref{alg:2}:\\
In each node of the tree decomposition, we have at most three branches and it happens at forget nodes. On the other hand, we have $\mathcal{O}(n)$ forget nodes in the tree. Thus, the total running time of Algorithm~\ref{alg:2} is $\mathcal{O}(n3^{d})$. So finally, the total running time of Algorithm~\ref{alg:1} is $\mathcal{O} (n^23^d) = \mathcal{O}^{*}(3^d)$.\\
Space complexity: We do not store the intermediate values and only compute one strand at a time, thus the space complexity is $\mathcal{O}(nk)$.
\end{proof}
\section{Extension}
So far, we saw how to find a minimum dominating set when the tree decomposition is given. Now, we describe how to find the size of a minimum dominating set using polynomial space when the graph itself is given.

\begin{remark}\label{rem:1}(\cite{reed1992finding})
In 1992, B. Reed  gave an approximated vertex separator algorithm that gives another algorithm determining whether a given graph $G$ has a tree decomposition of width at most $k$. Then it finds such a tree decomposition if it exists. The running time of this algorithm is $\mathcal{O}(n\log n)$ for any fixed $k$ using polynomial space.
\end{remark}
 It is not mentioned that the space complexity of the algorithm is polynomial but this algorithm saves only the results for each strand and has a polynomial space complexity.\\
 So, even if the tree decomposition is not given, still we can use Reed's algorithm to find a tree decomposition with width of $k$. We can start from $k=n$ and set $k$ to $\frac{k}{2}$ (binary search) if this algorithm says there is a tree decomposition of width $k$. At some point, the answer would be no and then the algorithm will output the tree decomposition of size $k$. Now, we can use the output to compute the size of minimum dominating 
set of the original graph by Algorithm~\ref{alg:1}.\\
We also should mention that, the same framework can be applied on the Maximum Independent Set (MIS) problem to covert a dynamic programming algorithm using exponential time to a parametrized algorithm taking tree depth as its parameter using only polynomial space. On the other hand, if we look closer to the Maximum Independent Set problem and write down the recursions, we see that also by applying zeta transform we save space, but we sacrifice huge amount of time. But the good point about MIS is that, if we solve it using recursion on tree decomposition, it has already polynomial space complexity and we don't need to apply zeta transform.\\
One good problem would be to see if this framework works on the Hamiltonian Cycle problem. This one is much complicated because the nature of the problem is different. In previous problems, only pairs of vertices and edges (individually) were important but here one should look for all of possible disjoint paths. \\
In general, any problem of graphs (even if the problem is not on graph, (maybe) we can convert it) which has convolution in the join nodes, can be a candidate for our framework.

\section{Conclusion}
We applied the dynamic algebraization approach (\cite{martin}) to the Minimum Dominating Set problem to give a space-efficient dynamic programming algorithm using polynomial space. Our algorithm runs in time $\mathcal{O}(n^23^d)$ using $\mathcal{O}(nk)$ space where $d$ and $k$ are depth and width of the tree decomposition respectively. Even if the tree decomposition is not given, as mentioned in Remark~\ref{rem:1}, we can construct a tree decomposition with sufficiently small treewidth by using polynomial space and then apply our algorithm. Again, the space complexity is polynomial, and for some function $f$, the running time is $O(f(k) n \log n) = O(f(d) n \log n)$ parametrized by the treewidth $k$ or tree-depth $d$.\\
The essential part is to do the computation on the zeta transformed mirror image of the tree decomposition to save space. 
As in \cite{martin}, it is important to introduce the edges in auxiliary leaves to avoid an exponential blow-up.


\newpage
\appendix

\end{document}